\newtheorem{lemma}{Lemma}
\newtheorem{definition}{Definition}
\newtheorem{proposition}{Proposition}
\def\EE{\mathord{I\kern-.35em E}}
\begin{document}
\smallskip

\thispagestyle{empty}
\bigskip
\bigskip
\bigskip
\bigskip
\bigskip
\bigskip
{\setlength{\parskip}{0.6cm}
\renewcommand{\thefootnote}{\fnsymbol{footnote}}
\begin{center}
{\LARGE \bf Illiquid Financial Markets \\ and Monetary Policy\footnote[1]{$\
    $We are very grateful to Darrell Duffie, Ricardo Lagos, and Randall Wright for their guidance and helpful discussions. We would also like to thank Philipp Kircher, Jan Eeckhout, Aus\'{i}as Rib\'{o}, and seminar participants at the Summer Workshop on Money, Banking, Payments and Finance at the Federal Reserve Bank of Chicago, the Search and Matching Workshop at University of Pennsylvania, Universidad P\'{u}blica de Navarra, Universitat Aut\'{o}noma de Barcelona, and Universitat de Barcelona. Financial support from the Spanish Ministry of Science and Innovation through grant ECO2009-09847, the Juan de la Cierva Program, and the Barcelona Graduate School Research Network is gratefully acknowledged. All errors are solely our responsibility.}}

\bigskip
\smallskip

{\large Athanasios Geromichalos\footnote[2]{$\
    $Department of Economics, UC Davis. Contact: ageromich@ucdavis.edu}, Juan M. Licari\footnote[3]{$\
    $Moody's Analytics. Contact: juan.licari@moodys.com}, Jos\'{e} Su\'{a}rez-Lled\'{o} \footnote[4]{$\
    $Departmento de Econom\'{i}a e Historia Econ\'{o}mica. Universidad Aut\'{o}noma de Barcelona. Contact: jose.suarezlledo@uab.es}}
\end{center}}

{\setlength{\parskip}{0.3cm}
\begin{center}

\bigskip
\smallskip

    July, 2011
\end{center}}

\renewcommand{\thefootnote}{\arabic{footnote}}

\medskip

\begin{abstract}

\noindent This paper analyzes the role of money in markets where financial investment takes place in a decentralized fashion. A key methodological contribution is the development of a dynamic framework that brings together a model for illiquid financial assets \textit{\`{a} la} Duffie, G\^{a}rleanu, and Pedersen, and a monetary framework \textit{\`{a} la} Lagos and Wright. The presence of decentralized financial markets generates an essential role for money in helping investors re-balance their portfolios. From the equilibrium conditions we are able to derive an asset pricing theory that delivers an explicit connection between monetary policy, the different asset prices, and welfare. In contrast to the existing monetary literature, we can sustain welfare gains even with positive inflation rates. Also, we obtain a negative relationship between inflation and asset prices, that are always above their fundamental value. This price differential is associated with a liquidity premium.

\end{abstract}

\bigskip
\medskip

\noindent{\it JEL classification:} E44, E52, G11, G12.

\medskip

\noindent{\it Keywords:} monetary
policy, asset pricing, decentralized markets, liquidity.

\newpage
%%%%%%%%%%%%%%%%%%%%%%%%%%%%%%%%%%%%%%%%%%%%%%%%%%%%%%%%%%%%%%%%%%
%%%%%%%%%%%%%%%%%%%%%%%%%%%%%%%%%%%%%%%%%%%%%%%%%%%%%%%%%%%%%%%%%%%
\section{Introduction}

Over the last two decades secondary financial markets have developed considerably, both in size and complexity. New financial products are generated through processes of securitization by which financial assets are derived from the value of an underlying asset. In their seminal paper Duffie, G\^{a}rleanu, and Pedersen (2005) document that many of these financial products are traded in markets that are characterized by search frictions. At the same time, since the influential work of Lagos and Wright (2005) monetary search models have been integrated into main-stream macroeconomic theory. Their model analyzes the role of monetary exchange in economies where trade is not centralized through some perfect and frictionless (Walrasian) market.

Our paper attempts to bring these two strands of the literature together by studying the role that money can play, through liquidity provision, in frictional financial markets. Duffie et al. analyze Over-the-Counter markets where financial assets trade without the use of a liquid asset like fiat money. On the other hand Lagos and Wright (2005) consider a model where money helps overcome certain frictions of decentralized trade. However, agents here trade consumption goods for the liquid asset. Our objective is to bring the framework in Duffie et al. (2005) into a dynamic monetary model where financial markets with frictions may generate a role for money.\footnote{\,As it is well known, in models that feature Arrow-Debreu type of markets it is very hard to support monetary equilibria other than those where agents are \symbol{92}forced\symbol{34} to hold money, e.g. money in the utility function or cash-in-advance models.} We present a model with a sequence of markets where agents can store a safe-return real asset and later on convert part of it into a risky and illiquid financial investment. The modeling of this risky asset follows Duffie et al.: it yields different idiosyncratic returns and agents can re-balance their portfolios in illiquid (decentralized) markets. A relevant innovative element in our approach is that investors can (and will) use money as a medium of exchange for those transactions. We show that money is not neutral and has an impact on both the volume of trade and the value of assets.

Duffie et al. consider the specific trading structure in decentralized financial markets where assets are traded by procedures like bid-ask pricing, bilateral or multilateral bargaining, etc. They build a model that rationalizes standard measures of liquidity in these markets such as trade volume, bid-ask spreads, and trading delays. However, investors' ability to re-balance their positions is severely limited by the restriction that agents can hold either 0 units or 1 unit of the assets. On the other hand, the search-theoretic literature stemming from Lagos and Wright (2005) develops a framework where centralized and decentralized markets interact. We consider this an appropriate setup for analyzing these issues that has not been exploited in our direction.

Lagos (2011) presents an asset pricing model where equity shares and fiat money can be used as means of payment for consumption, and derives a monetary policy scheme as a function of stochastic equity returns. In this model, however, there is neither collateral nor trading of assets, just exchange of consumption goods for real assets and fiat money. In a paper that more specifically tries to approach Over-the-Counter markets in a less restrictive manner than Duffie et al., Lagos and Rocheteau (2009) develop a framework where investors may wish to adjust their asset positions after a shock to preferences that affect their asset demand. They may do so contacting financial intermediaries that charge a fee. The authors try to explain the above mentioned dimensions of liquidity in these markets. In a more recent paper, and within a very similar setup, Lagos, Rocheteau, and Weill (2011) take a step further and consider crises and recoveries in these markets. They define a crisis as a negative shock to investors' asset demand that will recover at a random time. Liquidity can be provided by dealers, since they accumulate asset inventories. In the cases where it is not optimal for the dealers to accumulate assets, the government can step in and provide liquidity instead. Nevertheless, these models have neither fiat money nor assets serving as collateral, and therefore their notions of liquidity differ somehow from the standard ones. Ferraris and Watanabe (2011) model capital accumulation that will serve as collateral, and discuss the effects of fluctuations in the liquidation value of collateral on the economy's allocation and its interaction with monetary policy. As it turns out, however, we find that no relevant paper in this field has yet tried to formally approach the integration of money in illiquid financial markets.

Thus, we want to build a model that fills the gap between the finance and monetary literature. Our main contribution is the development of a framework that allows us to derive an asset pricing theory that delivers an explicit connection between monetary policy, the different asset prices, and welfare. We find that the value generated in these markets is affected by monetary policy, and the equilibrium price of the underlying asset is always above its fundamental value. In particular, two of our results stand in contrast to those in most of the search literature. First, under certain parameterizations welfare gains can be sustained even by positive inflation rates. In the related literature the optimal policy usually depicts deflationary money growth rates. Second, our pricing equations reveal that the monetary policy instrument and the price of the underlying asset are negatively correlated. The reason is that money and the asset are complements in our model, as opposed to a majority of the literature where they are substitutes. This aspect is further elaborated by Ferraris (2010) and Geromichalos, Licari, and Su\'{a}rez-Lled\'{o} (2007). We shall later discuss our results in more detail.

The rest of the paper is organized as follows. In sections 2 and 3 we present the main features of the model and describe the market structures that will allow us to study the role of the different assets, which is derived from the optimal behavior of each type of agent. In section 4 we proceed to analyze the existence of a monetary equilibrium as well as the nature of asset prices and the welfare consequences of inflation. Section 5 contains the main conclusions. In an Appendix we also consider two alternative pricing protocols in the decentralized market to test the robustness of the results.

\section{The Model}

The environment that we analyze takes the framework presented in Lagos and Wright (2005), henceforth LW, as a starting point. Time is discrete and there is a $[0,1]$ continuum of agents that live forever and discount future at rate $\beta \in (0,1)$. There are three assets in the model: a real asset that yields a deterministic return, $R$, at the end of every period, a risky financial asset, and an intrinsically worthless object that we call money. In every period agents engage in different activities in three markets that open sequentially. At the beginning of a period, $t$, agents enter an \textit{investment market} (IM) with a certain amount of the real asset, $a_t$, and money holdings, $m_t$, that they have previously stored. In this first market agents can decide how much of the safe asset to transform into a risky financial asset, $s_t\in[0,a_t]$, thus giving up any claim on the safe real return by the issued amount. Instead, investing in this new asset will yield a high return, $y_{_H}$, or a low return, $y_{_L}$, at the end of the period with equal probability. Furthermore, in order to make the analysis more interesting we are going to assume that $0\leq y_{_L}<y_{_H}$, and that $y_{_H}>R$. Following Duffie et al., we model idiosyncratic returns such that depending on the type of investor I turn out to be I will obtain different return. However, this idiosyncratic uncertainty will only be resolved right after the investment decisions have been made and before entering a \textit{decentralized financial market} (DFM). These returns are i.i.d. distributed across periods and agents. \newline \indent In the second market, having learned their types, agents can readjust their portfolios by re-balancing their holdings of the risky asset. This process happens in bilateral meetings where agents' types are public info. In particular, gains from this decentralized trade are generated \textit{\`{a} la} Berentsen and Rocheteau (2003) from low-return agents wanting to sell their investment assets to high-return agents. Anonymity, in the sense that trading histories are not known among agents, is still present in this environment. Also, there is imperfect enforcement and a double coincidence problem.\footnote{\,Only agents of different types will be willing to trade their risky assets: high types want to take on larger investment amounts in assets that will give them higher returns than the safe asset; and low types may want to get ride of low-yield investment.} These are some of the main elements that render money essential in this type of framework, as discussed in Kocherlakota (1998). In particular, we will show that money will allow to reallocate otherwise inefficient investment allocations. \newline \indent Finally, after the investment and the decentralized financial market everyone enters a \textit{centralized market} (CM) with their new holdings of the risky asset, the safe asset, and money. At this last stage agents will choose how much of the safe real asset and money they will carry onto the next period. They also derive net utility, $U(X_t)-H_t$, from consuming an amount of a general good, $X_t$, and from supplying an amount of labor, $H_t$. Notice that this is the only market in which agents consume and work. Utility, $U(X_t)$. is assumed to be twice continuously differentiable with $U'>0$, $U''\leq 0$. We also assume there exists $X^*$ such that $U'(X^*)=1$, with $U'(X^*)>X^*$. At this stage, agents can acquire any quantity of money, $\hat{m}_t$, and the real asset, $\hat{a}_t$, for the next period at prices $\phi$ and $\psi$, respectively. The supply of this asset is fixed, $A$, and each unit yields dividend, $R$, in the last market of the next period. Money supply is controlled by a monetary authority and it evolves according to $\hat{M}_t=(1+\mu)M_t$. \newline \indent This paper departs from LW and other related papers in two fundamental aspects that will be discussed in detail below: one is the introduction of an investment market for securitization, and the other relates to how we model decentralized financial trade. As we mentioned in the introduction, this departure will generate very interesting results, also regarding monetary policy. Some stand in contrast to those in related papers. Finally, it is an important issue in the literature that deals with decentralized trade, and particularly in the money-search literature, how different bargaining protocols may determine the terms of trade. A broad discussion on this is provided by Rocheteau and Wright (2005). In our paper we consider both price taking and bargaining.\footnote{\,Price taking can be regarded as the monetary version of Lucas and Prescott (1974).} However, the structure that we endow decentralized trade with turns out specially relevant. Indeed, we find that Arrow-Debreu equilibria where the price of the real asset deviates from its fundamental value can only be supported under price taking. Therefore, event though the bargaining version has some interesting aspects, we relegate its analysis and the justification of this result to the Appendix. In order to save notation, we drop the time subscripts from now on whenever it does not lead to confusion.

\section{Optimal Behavior}
\subsection{Centralized Market}

In order to derive a more clear intuition of the mechanism of the model, and for convenience of analysis, we proceed to solve the model backwards, starting from the third sub-period, a Walrasian centralized market. The value function of entering the CM with money holdings $m$, real asset holdings $b$, and financial investment $s$ is
\begin{eqnarray}
V^{3j}(m,b,s)=\max_{\hat{m},\hat{a},X,H}\left\{U(X)-H+\beta V^1(\hat{m},\hat{a}) \right\}\nonumber\\
\textit{s.t. } \phi \hat{m}+\psi\hat{a}+X=H+\phi m+(\psi+R)b+(\psi+y_{_j})s, \nonumber
\end{eqnarray}where $j=L,H$, and $b=a-s$, with $a$ being the amount of the real asset brought from the previous period. Three observations are immediate. First, in every period $X=X^*$ and we can write
\begin{eqnarray}
V^{3j}(m,b,s)=U(X^*)-X^*+\phi m+(\psi+R)b+ \nonumber \\ (\psi+y_{_j})s+\max_{\hat{m},\hat{a}}\left\{-\phi\hat{m}-\psi\hat{a}+\beta V^1(\hat{m},\hat{a}) \right\}.\label{V3}
\end{eqnarray}\noindent Second, $V^{3j}$ is linear in all its arguments,
\begin{eqnarray}
V^{3j}(m,b,s)=\Lambda+\phi m+(\psi+R)b+ (\psi+y_{_j})s,\label{linearity}
\end{eqnarray}where the definition of $\Lambda$ is obvious. Finally, it is easy to see from (\ref{V3}) that there are not any wealth effects: the agent's choices of $\hat{m},\hat{a}$ do not depend on today's states $m,a$. This is a consequence of quasi-linearity of preferences. Although we will discuss the solution to the optimization problem above later, we lay out here the first order conditions for $\hat{m}$ and $\hat{a}$, \begin{eqnarray}-\phi + \beta V^1_m(\hat{m},\hat{a}) \leq 0 && (=0,\textrm{ if } \hat{m}>0 \label{foc mhat}), \\ -\psi + \beta V^1_a(\hat{m},\hat{a}) \leq 0 && (=0,\textrm{ if } \hat{a}>0), \label{foc ahat}\end{eqnarray} where $V^1_i$ stands for the partial derivative of $V^1$ with respect to its argument $i$. From expression (\ref{linearity}) the envelope conditions would be \begin{eqnarray} V^{3j}_m=\phi m;\,\,\,V^{3j}_b=\psi + R;\,\,\,V^{3j}_s=\psi + y_j. \nonumber \end{eqnarray} The first order conditions refer to the fact that the cost of carrying the assets across periods must not be negative. Otherwise agents would want to accumulate unbounded amounts of money and the real asset. In a pure Arrow-Debreu world the equilibrium condition for the real asset would imply its price just reflecting the discounted flow of returns. In our model its price will reflect the discounted marginal value of carrying the asset into the next period, which will differ from its fundamental value.

\subsection{Decentralized Financial Market}

A large percentage of the financial sector is composed of markets for less liquid assets that are more difficult to trade in regular centralized markets. Trade is carried out in more decentralized procedures in these markets. Over-the-Counter markets are a especially relevant example (Duffie et al. (2005), Lagos et al. (2010)). However, it is also the case that in many of these markets agents do not usually bargain. Instead, they feature some type of price taking protocol (bid-ask pricing). Following the logic of these type of markets, we model here the exchange of a risky financial asset, $s$, that is less liquid than a standard real asset, $a$, commonly traded in a centralized Walrasian market. We also consider a price taking mechanism that resembles those in the kind of markets we are interested in. \newline \indent When agents enter our DFM they know what return they will obtain for their investment in the risky asset, and they are now allowed to re-balance their positions on this investment by trading the risky asset. Since $0<y_{_L}<y_{_H}$, L-types will naturally arise as the sellers of $s$, while the H-types will naturally become the buyers.\footnote{\,In principle it makes sense that L-types want to get rid of a bad asset, while H-types want to purchase more of an asset on which they can get a high return. As will be shown, the amount of money exchanged in this market will be either such that L-types are at least indifferent, which will make trade profitable for H-types, or such that H-types are indifferent, in which case trade will be profitable for L-types. Therefore, the natural arrangement in this market is actually always implemented in any equilibrium.} Let $p$ be the dollar price of one unit of investment. Conditional on being an L-type, and hence a seller, an agent who carries an amount $s$ solves the following problem\footnote{\,Given the competitive nature of this market, the only
variables that matter for the seller's problem are $s$ and the price (that she takes as given). Similarly, in the buyer's problem below the relevant variables are $m,p$.}
\begin{eqnarray}
\max_{q_s\leq s} & V^{3L}\left(m+pq_s,b,s-q_s \right)\nonumber
\end{eqnarray}or alternatively
\begin{eqnarray}
\max_{q_s\leq s} & \left\{\Lambda+\phi m+(\psi+R)b+(\psi+y_{_L})s+(\phi p-\psi-y_{_L})q_s \right\},\nonumber
\end{eqnarray}where $q_s$ is the supply of securities. The seller's optimal behavior yields the individual supply function
\begin{eqnarray}
q_s^*=\left\{\begin{array} {l@{\quad}l} 0,\, \textit{if}
\,\,\,p<\frac{\psi+y_{_L}}{\phi},
\\ \in[0,s], \textit{if} \,\,\, p=\frac{\psi+y_{_L}}{\phi},\\
s,\, \textit{if} \,\,\,p>\frac{\psi+y_{_L}}{\phi}.
\end{array}\right.\nonumber
\end{eqnarray} In short, L-types will not sell if the price is low, they will be indifferent if the price falls in a medium range, and would want to sell everything if the price exceeds the value of the asset. Interestingly, notice that since the price of $s$ is in nominal terms and that of the real asset is in real terms, inflation will have something to say here. Sellers compare $p$ to the nominal value of $\psi + y_{_L}$ by dividing it by $\phi$. Therefore, as the real value of money, $\phi$, increases it is more likely that $p>(\psi + y_{_L})/ \phi$ and they will want to sell everything. Conversely, as inflation is higher, $\phi \rightarrow 0$, sellers decrease their supply, $q_s \rightarrow 0$. \newline \indent Conditional on being an H-type, an agent solves
\begin{eqnarray}
\max_{q_b\leq m/p} & V^{3H}\left(m-pq_b,b,s+q_b \right)\nonumber
\end{eqnarray}or alternatively
\begin{eqnarray}
\max_{q_b\leq m/p} & \left\{\Lambda+\phi m+(\psi+R)b+(\psi+y_{_H})s+(\psi+y_{_H}-\phi p)q_b \right\}.\nonumber
\end{eqnarray}where $q_b$ is the demand of securities. The demand function is given by
\begin{eqnarray}
q_b^*=\left\{\begin{array} {l@{\quad}l} 0,\, \textit{if}
\,\,\,p>\frac{\psi+y_{_H}}{\phi},
\\ \in[0,m/p], \textit{if} \,\,\, p=\frac{\psi+y_{_H}}{\phi},\\
m/p,\, \textit{if} \,\,\,p<\frac{\psi+y_{_H}}{\phi}.
\end{array}\right.\nonumber
\end{eqnarray} The interpretation for buyers is the opposite as that for sellers, and inflation also plays a role; as $\phi \rightarrow 0$, buyers increase their demand, $q_b \rightarrow m/p$.
Figure 1 depicts the aggregate demand and supply curves. The equilibrium price and the quantity of $s$ that is traded depend on the shape of the demand curve. In particular, inspection of the demand and supply functions shows that equilibrium depends on the distribution of the returns on the financial asset. If $\phi m<(\psi +y_{_L})s$ (represented by $D_1$ in Figure 1), we have $Q^*=(\lambda/2)(\phi m)/(\psi+y_{_L})$. If $\phi m\geq (\psi +y_{_L})s$ (represented by $D_2$ in Figure 1), we have $Q^*=\lambda s/2$. Therefore, $Q^*=(\lambda/2)\min\left\{s,(\phi m)/(\psi+y_{_L})\right\}$. The parameter $\lambda$ is the probability of going into the DFM. It can be seen as the degree of access to decentralized financial markets. Thus, this parameter can also be interpreted as a measure of the liquidity of these markets. The equilibrium quantity of $s$ that a representative agent (H-type) acquires in the DFM is then
\begin{eqnarray}
q^*=\min\left\{s,\frac{\phi m}{\psi + y_{_L}} \right\},\label{equilq}
\end{eqnarray} and the equilibrium price is given by
\begin{eqnarray}
p^*=\left\{\begin{array} {l@{\quad}l} \frac{\psi + y_{_L}}{\phi},\, \textit{if}
\,\,\,\phi m<(\psi +y_{_L})s,
\\ \frac{m}{s}, \textit{if} \,\,\, \phi m\in [(\psi+y_{_L})s,(\psi +y_{_H})s],\\
\frac{\psi + y_{_H}}{\phi},\, \textit{if} \phi m>(\psi +y_{_H})s \,\,\,.
\end{array}\right.\label{equilp}
\end{eqnarray} Interestingly enough, the equilibrium in this market depends on real money balances, $\phi m$. Portfolio re-balancing generates a role for money and takes place at the equilibrium price, $p^*$, which depends on inflation through $\phi$. Moreover, it is key to notice from equation (\ref{equilq}) that trade in this market exists, $q^*>0$, if and only if $\phi m >0$. Therefore, the relationship between monetary equilibrium and financial trade is one-to-one.

\subsection{Investment Market}

At the beginning of every period, agents enter a financial market with holdings of money and real asset from previous period. Thus, the vector of state variables is $(m,a)$, and the agent wishes to maximize her continuation value in the DFM by choosing optimally what part of $a$ to invest in the risky asset. Therefore, an agent that enters the IM with portfolio $(m,a)$ has a value function\footnote{\,Since $b=a-s$ it does not
matter whether the agent chooses $s$ or $b$.}
\begin{eqnarray}
V^1(m,a)=\max_{s\in[0,a]}\left\{ \frac{1}{2}\left[ V^{2L}\left( m,b,s \right) +V^{2H}\left( m,b,s \right) \right] \right\}\label{V1}.
\end{eqnarray}

In order to examine the optimal choice of $s$ for the agent, we need to replace the value functions $V^{2j}$ with more useful expressions. To that end notice that
\begin{eqnarray}
V^{2L}\left( m,b,s \right)=\lambda V^{3L}\left( m+p q_s,b,s-q_s\right)+(1-\lambda)V^{3L}\left( m,b,s \right),\nonumber \\
V^{2H}\left( m,b,s \right)=\lambda V^{3H}\left( m-p q_s,b,s+q_s\right)+(1-\lambda)V^{3H}\left( m,b,s \right).\nonumber
\end{eqnarray}Equivalently, using the results derived above regarding the terms of trade in the DFM we can write
\begin{eqnarray}
V^{2L}\left( m,b,s \right)=\lambda V^{3L}\left( m+p^* q^*,b,s-q^*\right)+(1-\lambda)V^{3L}\left( m,b,s \right),\nonumber \\
V^{2H}\left( m,b,s \right)=\lambda V^{3H}\left( m-p^* q^*,b,s+q^*\right)+(1-\lambda)V^{3H}\left( m,b,s \right).\nonumber
\end{eqnarray}Then, exploiting the linearity of $V^{3j}$, equation (\ref{linearity}), and using (\ref{equilq}), we obtain
\begin{eqnarray}
\frac{1}{2}\left[ V^{2L}\left( m,b,s \right) +V^{2H}\left( m,b,s \right) \right]=\Lambda+\phi m+(\psi +R)a + \nonumber \\  +\left[ \frac{1}{2}\left(y_{_L}+y_{_H}\right)-R\right]s+\frac{\lambda}{2}\left(y_{_H}-y_{_L}\right)\min\left\{ s, \frac{\phi m}{\psi +y_{_L}} \right\}.\label{V11}
\end{eqnarray}
Inserting (\ref{V11}) into (\ref{V1}), we can rewrite the objective function as
\begin{eqnarray}
&\max_{s\in[0,a]}\Big\{ \left[ \frac{1}{2}\left(y_{_L}+y_{_H}\right)-R\right]s+\frac{\lambda}{2}\left(y_{_H}-y_{_L}\right)\min\left\{ s, \frac{\phi m}{\psi +y_{_L}} \right\} \Big\}\equiv &\nonumber \\
&\equiv \max_{s\in[0,a]}\Big\{ \alpha_1 s+\alpha_2 \min\left\{ s,\alpha_3 \right\} \Big\}.\nonumber &
\end{eqnarray}The definitions of the terms $\alpha_i$, $i=1,2,3$, are obvious and are adopted for analytical convenience. The objective function of the
agent is very intuitive. The first term, $\frac{1}{2}(y_{_L}+y_{_H})-R$, represents that for every unit of $a$ that she turns into $s$, she forgoes the return $R$, and gains $y_{_L}$ or $y_{_H}$ with equal probability. The second term
of the objective is the expected gain from trade in the DFM, which is equal to the total number of units of $s$ that trade in the DFM multiplied by
the average gain from this transaction. For every unit of $s$ that goes from the hands of an L-type to those of an H-type, a surplus equal to $y_{_H}-y_{_L}$ is generated.

\indent Consider now the optimal choice of $s$, namely, $s^*$. In order to focus on the more interesting situations we want to consider the case where the expected return from the risky investment is less than the return on the safe asset, i.e. we assume that $\alpha_1\equiv (1/2)\left(y_{_L}+y_{_H}\right)-R\leq0$. This means that if agents
were not able to trade $s$ in the DFM, they would never choose $s^*>0$.\footnote{\,Alternatively, the assumption that $\alpha_1\leq0$ shows the
robustness of the results in our model: if agents choose $s^*>0$ when the net return (excluding the potential gains in the DFM) of the risky asset is non-positive, then they will definitely do so if $\alpha_1$ were positive.} The objective of the agent is depicted in Figure 2 for various parameter values. The key variable for the determination
of $s^*$ is $\alpha_1+\alpha_2=(1/2)[(1+\lambda)y_{_H}+(1-\lambda)y_{_L}]-R$. This term is the multiplier of $s$ for $s\leq\alpha_3$. When the financial market is fairly liquid, $\lambda$ big, more weight is put on $y_{_H}$ and $\alpha_1+\alpha_2$ is also big. If financial markets are illiquid, $\lambda=0$, the agent never gets to trade into the DFM; the gain from holding $s$ coincides with the net
return of the risky asset, which is assumed to be non-positive. If $\lambda =1$, $\alpha_1+\alpha_2=y_{_H}-R>0$. The following Lemma summarizes the optimal investment policy.

\begin{lemma}
For a given state $(m,a)$, the optimal choice of $s\in[0,a]$ is given by
\begin{eqnarray}
s^*=\left\{\begin{array} {l@{\quad}l} 0,\, \textit{if} \,\,\,\alpha_1+\alpha_2<0, \\
\in\left[0,\min\{\alpha_3,a\}\right], \textit{if} \,\,\, \alpha_1+\alpha_2=0,\\
\min\{\alpha_3,a\}, \textit{if} \,\,\, \alpha_1+\alpha_2>0 \, \textit{and} \,\, \alpha_1<0,\\
\in\left[\min\{\alpha_3,a\},a\right], \textit{if} \,\,\, \alpha_1+\alpha_2>0 \, \textit{and}\,\, \alpha_1=0.
\end{array}\right.\label{optimals}
\end{eqnarray}
\label{lemmaoptimals}
\end{lemma}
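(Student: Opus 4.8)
The proof is an exercise in maximizing a piecewise-linear function over an interval, so the plan is to read off the solution from the structure of the objective $f(s) \equiv \alpha_1 s + \alpha_2 \min\{s, \alpha_3\}$ on $[0,a]$. First I would note that $\alpha_2 = (\lambda/2)(y_H - y_L) \geq 0$ by the maintained assumptions $\lambda \in [0,1]$ and $y_L < y_H$, and that $\alpha_3 = \phi m/(\psi + y_L) \geq 0$. Hence $f$ is continuous and piecewise linear with a single kink at $s = \alpha_3$: on $[0, \min\{\alpha_3, a\}]$ it has slope $\alpha_1 + \alpha_2$, and on $[\min\{\alpha_3,a\}, a]$ (nonempty only when $\alpha_3 < a$) it has slope $\alpha_1$. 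Since by hypothesis $\alpha_1 \leq 0$, the function is nonincreasing on the upper piece, so the maximizer never lies strictly above $\min\{\alpha_3, a\}$ unless $\alpha_1 = 0$, in which case the whole upper piece is flat and any point of $[\min\{\alpha_3,a\}, a]$ is optimal.

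Next I would dispatch the lower piece according to the sign of its slope $\alpha_1 + \alpha_2$. If $\alpha_1 + \alpha_2 < 0$ then $f$ is strictly decreasing on $[0,\min\{\alpha_3,a\}]$ and, combined with $\alpha_1 \leq \alpha_1 + \alpha_2 < 0$ on the upper piece, strictly decreasing on all of $[0,a]$; hence $s^* = 0$ uniquely. If $\alpha_1 + \alpha_2 = 0$ the lower piece is flat at value $0$; moreover in that case $\alpha_1 = -\alpha_2 \leq 0$, and if $\alpha_1 < 0$ the upper piece strictly decreases so the optimum is exactly the flat segment $[0, \min\{\alpha_3,a\}]$ (if $\alpha_1 = 0$ too, then $\alpha_2 = 0$, $f \equiv 0$, and $[0,a] = [0,\min\{\alpha_3,a\},a]$ trivially matches the stated interval). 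If $\alpha_1 + \alpha_2 > 0$ the lower piece strictly increases, so the maximizer is at least $\min\{\alpha_3, a\}$; then the earlier observation about the upper piece pins it down: $s^* = \min\{\alpha_3, a\}$ when $\alpha_1 < 0$, and $s^* \in [\min\{\alpha_3,a\}, a]$ when $\alpha_1 = 0$. This exhausts the four cases in \eqref{optimals}.

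The only mild subtlety — and the step I would be most careful with — is the boundary interaction between the kink location $\alpha_3$ and the upper endpoint $a$. When $\alpha_3 \geq a$ there is no upper piece at all: $f$ is linear on $[0,a]$ with slope $\alpha_1 + \alpha_2$, and $\min\{\alpha_3, a\} = a$, so the formula still reads correctly (e.g. $s^* = a$ in the third case, and the degenerate interval $[\min\{\alpha_3,a\}, a] = \{a\}$ in the fourth). Thus the $\min\{\alpha_3, a\}$ in the statement is exactly what makes the four cases uniform across both regimes $\alpha_3 < a$ and $\alpha_3 \geq a$, and I would make that explicit rather than treat it as automatic. No deeper obstacle arises: the result is a direct consequence of the geometry of a concave-then-linear (in fact, slope-$\alpha_1{+}\alpha_2$-then-slope-$\alpha_1$, with $\alpha_1 \leq 0$) piecewise-linear function, and the argument is complete once the sign analysis above is assembled.
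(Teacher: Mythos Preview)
Your proposal is correct and follows essentially the same approach as the paper: the paper's proof consists of the single sentence ``The result follows from inspection of the objective function and Figure 2,'' and what you have written is precisely that inspection spelled out in full, via a slope-by-slope analysis of the piecewise-linear map $s\mapsto \alpha_1 s+\alpha_2\min\{s,\alpha_3\}$ under the maintained assumption $\alpha_1\le 0$. Your added care about the boundary case $\alpha_3\ge a$ is a welcome clarification but does not constitute a different method.
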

\begin{proof}
The result follows from inspection of the objective function and Figure 2.
\end{proof} The following lemma will be useful in later analysis and expresses that the net gain of carrying assets across periods is non-positive. The interpretation behind this is that agents will only be willing to carry assets, money and real asset, if there are positive gains from re-balancing their investment positions in the DFM. Indeed, we show below that one of the main contributions of the model is that the value generated by this mechanism is reflected in the asset pricing equation. Should this gain not exist, only the fundamental value of the asset would be reflected by its price, i.e. the discounted flow of expected returns.
\begin{lemma}
In any equilibrium
\begin{eqnarray}
&\psi\geq \beta\left( R+\hat{\psi} \right),& \nonumber\\
&\phi\geq \beta \hat{\phi}.&\nonumber
\end{eqnarray}
\label{lemmanegativity}
\end{lemma}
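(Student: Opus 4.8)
The plan is to obtain both inequalities as no-accumulation (first-order) conditions for the portfolio choice $(\hat m,\hat a)$ made in the CM, using only that the continuation value $V^1$ grows in each argument at least as fast as the corresponding ``direct'' term. Recall from (\ref{V3}) that, after substituting the budget constraint, the CM agent picks $(\hat m,\hat a)\ge 0$ to maximize $-\phi\hat m-\psi\hat a+\beta V^1(\hat m,\hat a)$; and recall that combining (\ref{V1}) with (\ref{V11})---evaluated one period ahead, so that the prices appearing inside $V^1$ are the next-period prices $\hat\phi$ and $\hat\psi$---yields
\[
V^1(m,a)=\hat\Lambda+\hat\phi\, m+(\hat\psi+R)\,a+\max_{s\in[0,a]}\Big\{\alpha_1 s+\alpha_2\min\Big\{s,\tfrac{\hat\phi m}{\hat\psi+y_{_L}}\Big\}\Big\},
\]
where $\alpha_1=\tfrac12(y_{_L}+y_{_H})-R$ and $\alpha_2=\tfrac{\lambda}{2}(y_{_H}-y_{_L})$ as above.

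First I would record two monotonicity bounds on $V^1$. Since $\alpha_2\ge 0$ and $t\mapsto\min\{s,t\}$ is nondecreasing, the bracketed maximand is nondecreasing in $m$ for each fixed $s$; and since enlarging the feasible set $[0,a]$ can only raise a maximum, the bracket is also nondecreasing in $a$. Hence, for every $\epsilon>0$,
\[
V^1(m+\epsilon,a)-V^1(m,a)\ \ge\ \hat\phi\,\epsilon
\qquad\text{and}\qquad
V^1(m,a+\epsilon)-V^1(m,a)\ \ge\ (\hat\psi+R)\,\epsilon .
\]
Using these finite differences rather than derivatives is deliberate: $V^1$ has kinks coming from the $\min$ and from the constraint $s\le a$, so $V^1_m$ and $V^1_a$ need not exist everywhere; where they do, the bounds read $V^1_m\ge\hat\phi$ and $V^1_a\ge\hat\psi+R$.

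Then I would conclude by a one-sided deviation at the CM optimum. Optimality of $\hat m$ gives $-\phi\hat m+\beta V^1(\hat m,\hat a)\ge -\phi(\hat m+\epsilon)+\beta V^1(\hat m+\epsilon,\hat a)$, i.e. $\phi\,\epsilon\ge\beta\big[V^1(\hat m+\epsilon,\hat a)-V^1(\hat m,\hat a)\big]\ge\beta\hat\phi\,\epsilon$, so dividing by $\epsilon$ gives $\phi\ge\beta\hat\phi$. The identical argument applied to $\hat a$ gives $\psi\,\epsilon\ge\beta\big[V^1(\hat m,\hat a+\epsilon)-V^1(\hat m,\hat a)\big]\ge\beta(\hat\psi+R)\,\epsilon$, hence $\psi\ge\beta(R+\hat\psi)$. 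Equivalently, the same conclusion is just the first-order conditions (\ref{foc mhat})--(\ref{foc ahat}) together with the envelope-type lower bounds above; note that no interiority of $(\hat m,\hat a)$ is needed, since the perturbations only increase holdings and are therefore always feasible.

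I do not expect a real obstacle: the statement is essentially the no-accumulation condition already built into the CM problem (``the cost of carrying the assets across periods must not be negative''). The only points that need care are (i) keeping track of the fact that the prices inside $V^1$ are next-period prices, which is exactly what turns the growth bounds into the stated relations between $(\phi,\psi)$ and $(\hat\phi,\hat\psi)$, and (ii) the mild non-smoothness of $V^1$, which is why I would argue with finite-difference perturbations instead of invoking the envelope theorem verbatim.
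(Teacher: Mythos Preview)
Your argument is correct. The paper itself does not prove this lemma: it simply states that ``this is a standard result in the literature and its proof is omitted,'' referring the reader to Geromichalos, Licari, and Su\'{a}rez-Lled\'{o} (2007). What you have written is precisely the standard no-arbitrage/no-accumulation argument that underlies such results in Lagos--Wright-type environments: bound the continuation value from below by its ``direct'' linear part and then use a one-sided upward perturbation at the CM optimum. Your care with finite differences (rather than derivatives) to sidestep the kinks in $V^1$ is appropriate and, if anything, more careful than what the underlying literature typically spells out. One small point you may want to note explicitly for completeness: the lump-sum transfer $\mu M$ that shifts next-period money holdings (as in the paper's Case~1 and Case~2 derivations) does not affect your perturbation argument, since increasing $\hat m$ by $\epsilon$ still increases next-period money by exactly $\epsilon$.
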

\begin{proof}
This is a standard result in the literature and its proof is omitted. For details see for example Geromichalos, Licari, and Su\'{a}rez-Lled\'{o} (2007).
\end{proof}
Having characterized the optimal choice of $s$ we can proceed to analyze the choice of $\hat{m},\hat{a}$ by examining the value function
$V^{3j}$. This choice will crucially depend on the parameter values. Once again
the key determinant is the term $\alpha_1+\alpha_2$.
\newline\newline\indent \underline{\textbf{Case 1:} $\alpha_1+\alpha_2<0$}. We know that in the next period the agent will choose to not invest anything
in the risky asset, i.e. $\hat{s}^*=0$ and $q^*=0$. Hence, we can write\footnote{\,Since $\hat{s}^*=0$, we have $\hat{b}=\hat{a}$. Also, the money holdings in the next period of an agent that chooses $\hat{m}$ in the current period's CM are given by $\hat{m}+\mu M$.}
\begin{eqnarray}
V^{3j}(m,b,s)=U(X^*)-X^*+\phi m+(\psi+R)b+ (\psi+y_{_j})s\,\,\,\,\,\,\,\,\,\,\,\,\,\,\,\,\,\,\,\nonumber \\
+\max_{\hat{m},\hat{a}}\left\{-\phi\hat{m}-\psi\hat{a}+\frac{\beta}{2}\left[ V^{2L}(\hat{m}+\mu M,\hat{a},0)+V^{2H}(\hat{m}+\mu M,\hat{a},0) \right] \right\}=\nonumber\\
=\Omega_1+ \max_{\hat{m},\hat{a}}\left\{-\phi\hat{m}-\psi\hat{a}+\beta\left[ \hat{\Lambda} +\hat{\phi}(\hat{m}+\mu M)+(R+\hat{\psi})\hat{a} \right.\right.\,\,\,\,\,\,\,\,\,\,\,\,\,\,\,\,\,\,\,\,\nonumber\\ \left.\left.  +\frac{\lambda}{2}
\left(y_{_H}- y_{_L} \right)\min\left\{0,\frac{\hat{\phi}(\hat{m}+\mu M)}{\hat{\psi}+y_{_L}} \right\} \right] \right\}=\,\,\,\,\,\,\,\,\,\,\,\,\,\,\,\,\,\,\,\,\,\,\,\,\,\,\,\,\,\,\,\,\,\,\,\,\,\,\,\,\,\nonumber\\
=\Omega_2 + \max_{\hat{m},\hat{a}}\left\{-\left(\phi-\beta \hat{\phi} \right)\hat{m}-\left[\psi-\beta \left(R+\hat{\psi}\right) \right]\hat{a} \right\},\,\,\,\,\,\,\,\,\,\,\,\,\,\,\,\,\,\,\,\,\,\,\,\,\,\,\,\,\,\,\nonumber
\end{eqnarray}where the first equality follows from (\ref{V11}) and the definition of $q^*$. The terms $\Omega_1$, $\Omega_2$, and $\hat{\Lambda}$ do not depend on $\hat{m}, \hat{a}$, and their definitions are obvious.\footnote{\,In particular, $\hat{\Lambda}$ is the
term we get if we lead $\Lambda$, defined in (\ref{linearity}), by one period. Since $\Lambda$ is independent of $m,a$, $\hat{\Lambda}$ is independent of $\hat{m}, \hat{a}$.} Recall from Lemma \ref{lemmanegativity} that the multipliers of both $\hat{m}$ and $\hat{a}$ above are non positive. Hence, it is easy to see that the optimal choices of these variables satisfy
\begin{eqnarray}
\hat{m}^*=\left\{\begin{array} {l@{\quad}l} 0,\, \textit{if} \,\,\,\phi>\beta \hat{\phi},
\\ \in\Re_+, \textit{if} \,\,\, \phi=\beta \hat{\phi},
\end{array}\right.
\hat{a}^*=\left\{\begin{array} {l@{\quad}l} 0,\, \textit{if} \,\,\,\psi>\beta\left( R+\hat{\psi}\right),
\\ \in\Re_+, \textit{if} \,\,\, \psi=\beta\left( R+\hat{\psi}\right).
\end{array}\right.\label{optm}
\end{eqnarray}
We will return to describe equilibrium in this case. Before that, we complete our description of the optimal choice of $\hat{m},\hat{a}$. \newline\newline\indent \underline{\textbf{Case 2:} $\alpha_1+\alpha_2\geq0$}. Here $\hat{s}^*=\min\left\{\hat{a},\hat{\alpha}_3\right\}$, where $\hat{\alpha}_3=\hat{\phi}\left( \hat{m}+\mu M \right)/\left( \hat{\psi}+y_{{_L}} \right)$.\footnote{\,As opposed to $\alpha_3$, the terms $\alpha_1,\alpha_2$ are constant. Moreover, from Lemma \ref{lemmaoptimals} it follows that
$\hat{s}^*=\min\left\{\hat{a},\hat{\alpha}_3\right\}$ is always an optimal choice in the forthcoming period, although not the only one. This choice is the unique optimal,
only if $\alpha_1+\alpha_2>0$ and $\alpha_1<0$. Nevertheless, plugging $\hat{s}^*=\min\left\{\hat{a},\hat{\alpha}_3\right\}$ into the value function $V^{3j}$ is always
a good idea, since it yields the same result as any other $\hat{s}^*$.} Following the same strategy as above, we write
\begin{eqnarray}
V^{3j}(m,b,s)=U(X^*)-X^*+\phi m+(\psi+R)b+ (\psi+y_{_j})s\nonumber \\
+\max_{\hat{m},\hat{a}}\Big\{-\phi\hat{m}-\psi\hat{a}+\frac{\beta}{2}\left[ V^{2L}(\hat{m}+\mu M,\hat{a}-\min\left\{\hat{a},\hat{\alpha}_3\right\},\min\left\{ \hat{a},\hat{\alpha}_3\right\})\right.\nonumber\\\left.+V^{2H}(\hat{m}+\mu M,\hat{a}-\min\left\{\hat{a},\hat{\alpha}_3\right\},\min\left\{ \hat{a},\hat{\alpha}_3\right\}) \right] \Big\}=\nonumber\\
=\Omega_1+\max_{\hat{m},\hat{a}}\left\{ -\phi\hat{m} -\psi\hat{a} +\beta \left[ \hat{\Lambda} +\hat{\phi}\hat{m}+ \left( R+\hat{\psi} \right)\left( \hat{a}-\min\left\{ \hat{a},\hat{\alpha}_3\right\} \right) \right.\right. \nonumber\\ \left.\left. +\left( \hat{\psi}+\frac{y_{_H}+y_{_L}}{2} \right)
\min\left\{ \hat{a},\hat{\alpha}_3  \right\} + \frac{\lambda}{2}\left( y_{_H}-y_{_L} \right)\hat{q}^*\right] \right\}.\nonumber
\end{eqnarray}From (\ref{equilq}), the quantity of $s$ that changes hands in the DFM is given by
\begin{eqnarray}
\hat{q}^*=\min\left\{\hat{s}^*,\frac{\hat{\phi} \left( \hat{m}+\mu M \right)}{\hat{\psi} + y_{_L}} \right\}=\min\left\{\min\left\{\hat{a},\hat{\alpha}_3\right\},\frac{\hat{\phi} \left( \hat{m}+\mu M \right)}{\hat{\psi} + y_{_L}} \right\}=\nonumber\\
=\min\left\{\min\left\{\hat{a},\frac{\hat{\phi} \left( \hat{m}+\mu M \right)}{\hat{\psi} + y_{_L}}\right\},\frac{\hat{\phi} \left( \hat{m}+\mu M \right)}{\hat{\psi} + y_{_L}} \right\}=\min\left\{\hat{a},\frac{\hat{\phi} \left( \hat{m}+\mu M \right)}{\hat{\psi} + y_{_L}}\right\}.\nonumber
\end{eqnarray}Using this fact to rewrite the the value function $V^{3j}$, we conclude that the objective of the agent is to
\begin{eqnarray}
\max_{\hat{m},\hat{a}}\left\{-\left(\phi-\beta \hat{\phi} \right)\hat{m}-\left[\psi-\beta \left(R+\hat{\psi}\right) \right]\hat{a}\right.\,\,\,\,\,\,\,\,\,\,\,\,\,\,\,\,\,\,\,\,\,\nonumber\\
\left.+ \frac{\beta}{2} \left[ (1+\lambda)y_{_H}+(1-\lambda)y_{_L}-2R \right] \min\left\{\hat{a},\frac{\hat{\phi} \left( \hat{m}+\mu M \right)}{\hat{\psi} + y_{_L}}\right\} \right\}\equiv\nonumber\\
\equiv \max_{\hat{m},\hat{a}}\left\{-\gamma_1 \hat{m}-\gamma_2\hat{a}+ \gamma_3 \min\left\{\hat{a},\gamma_4 \hat{m}+\gamma_5\right\} \right\}.\nonumber\,\,\,\,\,\,\,\,\,\,\,\,\,\,\,\,\,\,\,
\end{eqnarray}

\indent The definitions of the terms $\gamma_i$, $i=1,..,5$ is obvious and greatly simplifies the notation. Notice that $\gamma_1,\gamma_2\geq0$. Also, $\gamma_3=
\beta(\alpha_1+\alpha_2)\geq0$. Finally, $\gamma_4>0$ and the sign of $\gamma_5$ depends on whether the monetary authority is running inflation or deflation,
i.e. the sign of $\mu$. The objective function is linear in $\hat{m},\hat{a}$. The optimal solution depends on the magnitude of the various
gamma's. Considering that both arguments in the $min$ function depend on decision variables of the agent, the optimal choice will have both arguments being equal to each other. Thus, it is relatively easy to verify that\footnote{\,In the third line, the condition $z\geq-\gamma_5/\gamma_4$ just guarantees that if $\mu<0$ the optimal choice
of $\hat{a}$ satisfies non-negativity.}
\begin{eqnarray}
\left(\hat{m}^*,\hat{a}^*\right)=\left\{\begin{array} {l@{\quad}l}
(0,0),\, \textit{if} \,\,\,\gamma_1+\gamma_2\gamma_4>\gamma_3\gamma_4, \\
(+\infty,+\infty), \, \textit{if} \,\,\, \gamma_1+\gamma_2\gamma_4<\gamma_3\gamma_4,\\
(z,\gamma_4 z+\gamma_5), \, \textit{for any} \,\, z\geq-\frac{\gamma_5}{\gamma_4},\, \textit{if} \,\,\, \gamma_1+\gamma_2\gamma_4=\gamma_3\gamma_4.
\end{array}\right.\label{optimalAM}
\end{eqnarray}

\section{Equilibrium}
We now proceed to the definition and the analysis of equilibrium.
\begin{definition}
An equilibrium for this economy is a set of value functions $V^{ij}$, $i=1,2,3$ and $j=L,H$ that satisfy the Bellman equations, a triplet $\left( p^*,q^*,s^* \right)$ that satisfy (\ref{equilq}), (\ref{equilp}), and (\ref{optimals}) in every period, and a pair of bounded sequences $\{\phi_t M_t\}_{t=0}^{\infty}$,$\{ \psi_t \}_{t=0}^{\infty}$, such that the agent behaves optimally under the conditions $a_t=A$ and $m_t=M_t$, all $t$.
\end{definition}
Consider first Case 1, i.e. $\alpha_1+\alpha_2<0$. From (\ref{optm}), a necessary condition for equilibrium is $\psi=\beta\left( R+\hat{\psi}\right)$, which implies that
the sequence of the asset price should follow the difference equation $\psi_{t+1}=-R+(1/\beta)\psi_t$. Since $\beta<1$, this can only be true if
\begin{eqnarray}
\psi_t=\bar{\psi}=\frac{\beta R}{1-\beta}. \label{fundamental}
\end{eqnarray}The asset price given by (\ref{fundamental}) is the fundamental value of the asset, i.e. the discounted stream of future dividends. The result according to which $\psi_t=\bar{\psi}$, reflects the fact that in the case under consideration the asset is only valued for the fruit it yields: agents never buy it in order to invest in the risky asset and possibly trade in the DFM.
\newline\indent The role of money in this economy is not essential. Agents have a positive demand for money only if $\phi=\beta \hat{\phi}$. In steady state this
implies that $\mu=1-\beta$, i.e the monetary authority is following the Friedman rule.\footnote{\,In steady state $\phi M=\hat{\phi}\hat{M}\Rightarrow \phi M=\hat{\phi}(1+\mu)M$. Therefore, $1+\mu=\phi/\hat{\phi}=\beta$.} Agents carry money only when the cost of doing so is zero (equivalently, when the nominal interest rate is zero). Of course, they never get to use that
money in the DFM, since $s^*=0$ and there is nothing to trade money with.
\newline\indent The more interesting case is therefore $\alpha_1+\alpha_2\geq0$. For the rest of this paper we assume that this condition holds. Optimality (described by (\ref{optimalAM})) reveals that a necessary condition for the existence of equilibrium is
\begin{eqnarray}
&\gamma_1+\gamma_2\gamma_4=\gamma_3\gamma_4 \Leftrightarrow& \nonumber \\
&\phi-\beta \hat{\phi}+\left[ \psi-\beta\left( R+\hat{\psi}\right) \right] \frac{\hat{\phi}}{\hat{\psi}+y_{_L}}=\beta(\alpha_1+\alpha_2)\frac{\hat{\phi}}{\hat{\psi}+y_{_L}}.&\label{diff}
\end{eqnarray} As we pointed in the definition above, we focus on steady state equilibria. That is, equilibria in which money balances and asset prices are constant in every period. However, in the appendix we show that bubbles, i.e. equilibria in which the asset price is above the market price for some periods, can never arise in this environment.
\newline\indent To simplify equation (\ref{diff}), divide both sides by $\hat{\phi}$ and recall that at the steady state $\phi/\hat{\phi}=1+\mu$. We obtain
\begin{eqnarray}
\left( 1+\mu-\beta \right)\left( \hat{\psi}+y_{_L} \right)= -\psi+\beta\left( R+\hat{\psi}\right) +\beta (\alpha_1+\alpha_2).\nonumber
\end{eqnarray} Using the definition of $\alpha_1+\alpha_2$ and solving with respect to $\hat{\psi}$ yields the asset pricing difference equation followed,
\begin{eqnarray}
\hat{\psi}=\frac{\frac{\beta}{2}\left[ (1+\lambda)y_{_H}+(1-\lambda)y_{_L} \right]-(1+\mu-\beta)y_{_L}}{1+\mu-2\beta}-\frac{1}{1+\mu-2\beta}\psi.\label{differ}
\end{eqnarray} As we explained, we drive our attention to the case in which $\psi=\hat{\psi}=\psi^*$.\footnote{\,The returns $y_{_H}$ and $y_{_L}$ are deterministic and with probability $0.5$. Therefore the only randomness comes from the idiosyncratic realizations of those returns. The probability of going to the DFM, $\lambda$, is also constant. Therefore, the intuition behind this is that the gain from decentralized trade is also always going to be the same, so it makes sense that the price of the asset is constant every period as well. More extensive discussion is provided later.} Using this in (\ref{differ}) and solving with respect to $\psi^*$ we obtain \begin{eqnarray}
\psi^*=\frac{\frac{\beta}{2}\left[ (1+\lambda)y_{_H}+(1-\lambda)y_{_L} \right]-(1+\mu-\beta)y_{_L}}{2(1-\beta)+\mu}.\label{psi}
\end{eqnarray}
\noindent It immediately follows from this equation that the price of the asset depends negatively on the growth rate of money. This result is
in contrast with the predictions of related papers, like Geromichalos et al (2007) or Lester, Postlewaite, and Wright (2007). In these papers agents bring money and real assets with them in the DFM in order
to buy a specialized good. Higher inflation makes carrying money more costly, and so people turn to the relatively cheaper asset, thus increasing its price. Unlike these papers, here money and the asset are complements, in the sense that agents need both objects in order to trade in the DFM.\footnote{\,More precisely, agents trade $m$ for $s$ in the DFM, but in order to be in possession of some $s$, agents need to bring some
$a$ from the previous period. To see why $m$ and $s$ are complements consider the extreme case in which all agents bring a billion dollars but $s^*=0$. The surplus generated in the DFM is still zero.} If money is costly to carry, the demand for the asset will decrease and so will its equilibrium price (given fixed supply).
\newline\indent A second observation that follows from (\ref{psi}) is that the price of the asset is always higher than the fundamental value given by (\ref{fundamental}). In the Appendix we show that this result is true as long as $\lambda>0$. Moreover, since the asset price is decreasing in the money growth rate, it obtains its maximum value when $\mu=\beta-1$. The resulting expression is given by
\begin{eqnarray}
\psi_{_{max}}=\frac{\beta}{1-\beta}\frac{1}{2}\left[ (1+\lambda)y_{_H}+(1-\lambda)y_{_L} \right]\geq \frac{\beta}{1-\beta}R, \nonumber
\end{eqnarray}where the inequality follows from the fact that $\alpha_1+\alpha_2\geq0$. Again, unless $\lambda =0$, the asset price is strictly greater than its fundamental value. Furthermore, since the price, $\psi^*$, can never be smaller than $\beta R/(1-\beta)$ because this would violate Lemma \ref{lemmanegativity}, the upper bound of admissible monetary policies can be uniquely defined by $\bar{\mu}\equiv\left\{ \mu: \psi^*(\mu)=\beta R/(1-\beta) \right\}$. A characterization of $\bar{\mu}$ is provided in the Appendix where we also show that equilibria with positive inflation, $\bar{\mu}>0$, can be supported depending on the distribution of idiosyncratic shocks. \newline \indent The asset pricing equation in this model reflects two dimensions. First, the asset is valued for the dividend it yields, $R$. Second the asset is also valued for its property to be transformed into $s$, thus allowing agents to generate additional value by re-balancing their portfolios in the DFM. Since access to the DFM is granted with probability $\lambda$, the difference expressed by $\psi^*-\beta R/(1-\beta)$ reflects the premium of the asset over the fundamental value. In particular, this premium is largest at the Friedman rule, in which case the term $\alpha_1+\alpha_2\equiv (1/2)\left[ (1+\lambda)y_{_H}+(1-\lambda)y_{_L} \right]-R$ would represent the spread. Notice, however, that in general such premium depends on the money growth rate, $\mu$. Then, we consider it appropriate to refer to it as a \textit{liquidity premium}. After all, additional value in this economy is generated by the degree of \symbol{92}liquidity\symbol{34} in the DFM. That is, value is generated along with the easiness to reallocate portfolios in the DFM, and this ultimately depends on the value of money.

\subsection{Welfare} In most other related models welfare is measured in a standard manner by two terms: the excess utility over the cost of producing the general consumption good in the centralized market, $U(X)-X$, and the excess utility of the total amount of consumption in the decentralized market over the cost of producing that total amount. However, in our economy there is no consumption good being traded in the decentralized market. Instead, the amount of assets traded in the DFM affects the budget constraint in the CM. Therefore, the way we measure welfare in this economy is by computing the total value of trade in the DFM in terms of the CM. \newline \indent Given that $\alpha_1+\alpha_2>0$, in the IM agents place all their asset holdings into the risky investment, $s^*=A$. In the DFM the amount of assets that change hands is given by $Q^*=(\lambda/2)\min\left\{ A,\phi M/(\psi+y_{_L}) \right\}=(\lambda/2)A$.\footnote{\,Remember that in any equilibrium with positive and finite money and asset holdings, these variables are chosen so that the two arguments in the $min$ function are the same.} This means that L-types sell all their assets to H-types, which is efficient from a social point of view, and in return they receive $p^*Q^*=\lambda A(\psi^*+y_{_L})/(2\phi^*)$. The latter accounts for the total value of trade in the DFM. One result that is as stricking as important is that total welfare is unaffected by inflation as long as $\mu\in\left[\beta-1,\bar{\mu}\right]$. This result implies, in contrast to other monetary models, that welfare gains can be sustained even by positive rates of inflation.\footnote{\,This result is closely connected to the nature of the returns to the risky asset. In an alternative specification of this model, we consider the case in which investing $s\in[0,a]$ units of the asset yields $\epsilon_j f(s)$, where $j=L,H$ and $f$ is a strictly increasing and concave function with standard Inada conditions. Because $lim_{s\rightarrow 0}f'(s)=\infty$, it is never optimal for the L-type to give up all her assets. This creates a link between inflation and the amount of $s$ that changes hands in the DFM. However, the analysis becomes very complex and we are only able to solve the model numerically. Since none of the major results of the model are altered, we prefer the more simple specification presented here, which yields closed form solutions and clear intuition.} If $\mu>\bar{\mu}$ carrying money becomes too expensive and the monetary equilibrium collapses, together with any trade in the DFM. A first best is achieved when $\lambda=1$ because decentralized trade (and, therefore, optimal re-alocation of resources) is maximized.
The following proposition summarizes the most important results.
\begin{proposition}
The key variable for the determination of equilibrium is $\alpha_1+\alpha_2$, which can be interpreted as the net return of the risky asset.
\newline\noindent \textbf{(a)} If $\alpha_1+\alpha_2\leq0$ no trade exists in the DFM, $q^*=s^*=0$, and the asset is only valued for the dividend it yields, i.e. $\psi^*=\beta R/(1-\beta)$. There is no essential role for money in this economy. Its real allocation coincides with that of an Arrow-Debreu economy and asset pricing follows the Lucas formula. A monetary equilibrium could only be supported by the Friedman rule.
\newline\noindent \textbf{(b)} If  $\alpha_1+\alpha_2>0$, $\psi_t=\psi^*(\mu)$ given by (\ref{psi}) for all $t$. The range of policies under which monetary equilibria can be supported is $\left[\beta-1,\bar{\mu} \right]$, where $\bar{\mu}$ is defined by $\bar{\mu}\equiv\left\{ \mu: \psi^*(\mu)=\beta R/(1-\beta) \right\}$. For all $\mu$ in this range, $d\psi^*/d\mu<0$. The bigger the $\lambda$, the bigger the premium between the actual value and the fundamental value of the asset. This liquidity premium is maximized at the Friedman rule. These results are presented in Figure 3. \newline\indent Equilibrium welfare depends on $\lambda$, i.e. the probability with which agents get to trade in the DFM and re-balance their positions. In the DFM L-types sell all their risky assets to H-types, hence all equilibria are (constrained) efficient. The First Best is achieved when $\lambda =1$.
\end{proposition}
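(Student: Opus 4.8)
The plan is to read the proposition off the optimality characterizations of Section~3 while imposing the market-clearing conditions $a_t=A$, $m_t=M_t$ from the equilibrium definition, treating $\alpha_1+\alpha_2\le 0$ and $\alpha_1+\alpha_2>0$ in turn. \textbf{Part (a):} if $\alpha_1+\alpha_2<0$, Lemma~\ref{lemmaoptimals} gives $s^*=0$, and then (\ref{equilq}) forces $q^*=0$, so there is nothing to trade in the DFM. Substituting $\hat s^*=0$ reduces the CM problem to the one with optimal policy (\ref{optm}); market clearing $a_t=A>0$ needs $\hat a^*>0$, which by (\ref{optm}) forces $\psi=\beta(R+\hat\psi)$, i.e. the difference equation $\psi_{t+1}=-R+\beta^{-1}\psi_t$, whose only bounded solution (since $\beta<1$) is $\psi_t=\bar\psi=\beta R/(1-\beta)$, the Lucas/fundamental value (\ref{fundamental}); the accompanying real allocation (each agent holding $A$ of the safe asset, consuming $X^*$) is the Arrow--Debreu one. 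A positive money demand in (\ref{optm}) needs $\phi=\beta\hat\phi$, which with the steady-state identity $\phi M=\hat\phi\hat M$ gives $1+\mu=\beta$, the Friedman rule, and since $s^*=0$ that money is never spent. The boundary $\alpha_1+\alpha_2=0$ is covered too: then $\gamma_3=\beta(\alpha_1+\alpha_2)=0$, so the knife-edge condition in (\ref{optimalAM}) collapses to $\gamma_1=\gamma_2=0$, i.e. again $\phi=\beta\hat\phi$ and $\psi=\beta(R+\hat\psi)$, yielding the same conclusions.

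\textbf{Part (b):} assume $\alpha_1+\alpha_2>0$. From the case-$\alpha_1+\alpha_2\ge0$ analysis the optimal portfolio obeys (\ref{optimalAM}); a monetary equilibrium needs $0<\hat m^*<\infty$ and $0<\hat a^*<\infty$ to clear at $M_t,A$, hence the knife-edge $\gamma_1+\gamma_2\gamma_4=\gamma_3\gamma_4$, which is (\ref{diff}). Dividing by $\hat\phi$, using $\phi/\hat\phi=1+\mu$ and the definition of $\alpha_1+\alpha_2$, gives the asset-price difference equation (\ref{differ}); imposing stationarity $\psi=\hat\psi=\psi^*$ and solving produces (\ref{psi}), and the equilibrium is completed by pinning real balances from $\hat\phi\hat M/(\psi^*+y_{_L})=A$ so the two arguments of the $\min$ in (\ref{equilq}) coincide. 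For the comparative statics write $\psi^*=N(\mu)/D(\mu)$ with $D(\mu)=2(1-\beta)+\mu$ and $N(\mu)=\tfrac{\beta}{2}[(1+\lambda)y_{_H}+(1-\lambda)y_{_L}]-(1+\mu-\beta)y_{_L}$. A short computation gives $N(\mu)+y_{_L}D(\mu)=\tfrac{\beta}{2}[(1+\lambda)y_{_H}+(1-\lambda)y_{_L}]+(1-\beta)y_{_L}>0$, hence $\psi^*+y_{_L}>0$ whenever $D(\mu)>0$, so with $D'(\mu)=1$,
\[
\frac{d\psi^*}{d\mu}=\frac{N'(\mu)D(\mu)-N(\mu)D'(\mu)}{D(\mu)^2}=\frac{-y_{_L}D(\mu)-N(\mu)}{D(\mu)^2}=-\frac{\psi^*+y_{_L}}{D(\mu)}<0 .
\]
Likewise $\partial N/\partial\lambda=\tfrac{\beta}{2}(y_{_H}-y_{_L})>0$ with $D$ independent of $\lambda$, so $\partial\psi^*/\partial\lambda>0$; since the fundamental value $\beta R/(1-\beta)$ depends on neither $\mu$ nor $\lambda$, the premium $\psi^*-\beta R/(1-\beta)$ rises with $\lambda$, falls with $\mu$, and is therefore maximized at the smallest admissible policy, with value $\tfrac{\beta}{1-\beta}(\alpha_1+\alpha_2)$ there since $\psi^*(\beta-1)=\psi_{_{max}}$.

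\textbf{Admissible range and welfare:} Lemma~\ref{lemmanegativity} gives $\phi\ge\beta\hat\phi$, i.e. $1+\mu=\phi/\hat\phi\ge\beta$, so $\mu\ge\beta-1$: the Friedman rule is the lower edge. The same lemma in steady state gives $\psi^*(1-\beta)\ge\beta R$, so no equilibrium exists once $\psi^*(\mu)<\beta R/(1-\beta)$; since $\psi^*$ is continuous and strictly decreasing with $\psi^*(\beta-1)=\psi_{_{max}}>\beta R/(1-\beta)$ (strict because $\alpha_1+\alpha_2>0$) and $\psi^*(\mu)\to-y_{_L}<\beta R/(1-\beta)$ as $\mu\to\infty$, the threshold $\bar\mu\equiv\{\mu:\psi^*(\mu)=\beta R/(1-\beta)\}$ is well defined, finite and unique, and monetary equilibria survive exactly on $[\beta-1,\bar\mu]$ with $d\psi^*/d\mu<0$ throughout. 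Finally, $\alpha_1+\alpha_2>0$ with $\alpha_1\le0$ gives $s^*=\min\{\alpha_3,a\}$ by Lemma~\ref{lemmaoptimals}, which with the $\min$-equalizing choice of holdings is $s^*=A$, so $Q^*=(\lambda/2)A$: every L-type sells her whole risky position to an H-type. Because $y_{_H}>y_{_L}$ this concentrates the assets on the high-valuation side, the best reallocation attainable given that only a fraction $\lambda$ of agents meet, so every equilibrium is constrained efficient; as $s^*=A$ and $Q^*=(\lambda/2)A$ are the same for all $\mu\in[\beta-1,\bar\mu]$, welfare is flat in $\mu$ there, and the First Best (maximal reallocation) is attained at $\lambda=1$.

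\textbf{Main obstacle:} The delicate step is nailing the admissible interval --- showing that the two inequalities of Lemma~\ref{lemmanegativity} are not merely necessary but coincide with the edges of sustainability, that $\bar\mu$ is finite and unique, and, implicitly, that only the stationary solution of (\ref{differ}) is an equilibrium (no bounded non-constant ``bubble'' path); this last point is what the paper defers to the Appendix, and everything else is bookkeeping on the already-derived optimality conditions.
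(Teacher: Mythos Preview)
Your proposal is correct and follows essentially the same route as the paper: the proposition is a summary of the equilibrium analysis in Section~4, and you reconstruct it by reading the two cases off Lemma~\ref{lemmaoptimals}, (\ref{optm}), (\ref{optimalAM}), and (\ref{diff})--(\ref{psi}) under market clearing, exactly as the text does. You are in fact more explicit than the paper on two points it only asserts or relegates: the sign computation $d\psi^*/d\mu=-(\psi^*+y_{_L})/D(\mu)<0$ (the paper just says ``it immediately follows''), and the existence/uniqueness of $\bar\mu$ via monotonicity and the limit $\psi^*(\mu)\to -y_{_L}$ (the paper simply defines $\bar\mu$ and gives its closed form in the Appendix). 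Your closing remark correctly flags the one genuinely deferred step, the exclusion of non-stationary solutions of (\ref{differ}), which the paper handles in the Appendix.
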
 \noindent If the decentralized financial market can generate value ($\alpha_1+\alpha_2>0$), then monetary equilibria, $\phi m >0$, exist and can be supported by monetary policies ranging from the Friedman rule to potentially positive inflation rates. Risky investments take place even if their expected return is worse than that of safe investments. Agents access the financial market to re-balance their portfolios so that investors with low returns sell transfer all their security holdings, $q^*=s^*=A$, in exchange for $p^*$ units of money. \newline \indent Thus, money plays an essential role as a medium of exchange in frictional financial markets. In particular, the price at which securities are traded, $p^*$, not only is related to the price of the real asset, but will also depend negatively on the money growth rate, $\mu$. The quantity exchanged, $s^*$, is independent of monetary considerations. More interestingly, the price of the real asset, $\psi^*=\psi(\mu)$, is always above its fundamental value and exhibits a liquidity premium that depends crucially on two notions of liquidity. On one side, the degree of access to decentralized financial markets, $\lambda$. And on the other side, the degree of liquidity within the financial markets, which depends on money supply, $\mu$. The liquidity premium of the real asset, and therefore its value, is maximized at the lowest feasible inflation rate. Positive money growth rate could also generate welfare gains.

\section{Conclusion}

Certain types of markets have quickly developed to attain an enormous size within the financial sector. They have also recently proven to have a considerable impact. Their structure deviates from the standard centralization of primary markets that is embedded in most macroeconomic models. Markets for securitization and Over-the-Counter markets stand as a reference of these type of markets. In this context a concern arises naturally: does monetary policy have something to say in these environments? \newline \indent We have presented a general equilibrium model that analyzes the role of money in decentralized financial markets. We provided an innovative framework that models a risky financial asset that can be issued from the value of a safe-return real asset. Agents can re-balance their positions on these relatively illiquid assets with a more liquid instrument (fiat money) in a decentralized financial market. We show that value can be generated from this transfer of assets, and therefore welfare can be increased, by reallocating otherwise inefficient investment. \newline \indent Our new approach yields very interesting results, some in contrast with the existing literature. On the one hand, monetary policy affects the price of the real and the financial asset, attaining their maximum values at a policy equivalent to the Friedman rule. In particular, since money and the real asset are complements, its price is negatively correlated with inflation. On the other hand, monetary policy then influences the value generated for the economy in the decentralized market. In fact, policy can always sustain asset prices above their fundamental values and generate welfare gains even under inflationary money growth rates, as long as those rates are not too large.

\begin{appendix}

\section{Appendix}

\textbf{Non-steady state equilibria.}
\newline\noindent We prove here that equilibria other than the steady state described above, cannot arise in this model. More precisely, we cannot have sequences $\{\phi_t M_t\}_{t=0}^{\infty}$,$\{ \psi_t \}_{t=0}^{\infty}$ that converge to a certain limit either monotonically or in an oscillating manner. Therefore, the only way to keep these sequences bounded is to have $\phi_t M_t$ and $\psi_t$ constant in all periods (in order to make this statement one needs to exclude cycles, which is the case here). As part of an equilibrium, we are looking for bounded sequences of money balances and asset prices such that (\ref{diff}) holds and also
\begin{eqnarray}
\phi_t M_t=(\psi_t+y_{_L})A,\,\,\, \forall t.\label{A1}
\end{eqnarray} Define real balances as $z\equiv\phi M$. If we multiply (\ref{diff}) by $\hat{M}$ and use (\ref{A1}) to substitute for $\hat{\phi}/(\hat{\psi}+y_{_L})$ and $\psi$, we conclude that real money balances follow a first-order linear difference equation,
\begin{eqnarray}
\hat{z}=-\frac{ \beta\left(\alpha_1+\alpha_2+R\right)+(1-\beta)y_{_L}}{2\beta}A+\frac{2+\mu}{2\beta}z.\nonumber
\end{eqnarray}However, $\mu\geq\beta-1$ always, which implies for the multiplier of $z$ that $(2+\mu)/(2\beta)>1$. Therefore, the sequence $\{\phi_t M_t\}_{t=0}^{\infty}$ will always be explosive, unless $(\phi M)_t=(\phi M)^*$ for all $t$. Since (\ref{A1}) has to hold in every period the same conclusion is true for $\{\psi_t\}_{t=0}^{\infty}$.

\noindent\textbf{Optimal monetary policy range.}
\newline\noindent In the section on equilibrium with the price taking version we discussed the feasible range for optimal monetary policies. Here we provide a more detailed characterization of the upper bound of that range. The whole point of the price taking version is that it supports asset prices higher than the fundamental value, and such price is decreasing in the money growth rate. Therefore, the upper bound for monetary policy is the money growth rate that makes the price of the real asset equal to its fundamental value, $\bar{\mu}\equiv \{\mu: \psi^*(\mu)=\beta R/(1-\beta)\}$. If we take the equation for the equilibrium price of the asset, (\ref{psi}), and make it equal to its fundamental value, we can solve for the money growth rate, $\bar{\mu}_b$. Thus, we write \begin{eqnarray} \psi^*=\frac{\frac{\beta}{2}[(1+\lambda)y_{_H}+(1-\lambda)y_{_L}]-(1-\mu -\beta)y_{_L}}{2(1-\beta)+\mu}=\frac{\beta R}{1-\beta}. \nonumber \end{eqnarray} After some algebra one arrives at the following expression \begin{eqnarray} \bar{\mu}_b=\beta -1+\frac{\beta (1-\beta)(\alpha_{_1}+\alpha_{_2})}{y_{_L}(1-\beta)+\beta R}. \nonumber \end{eqnarray} which will be positive as long as $\alpha_{_1}+\alpha_{_2}>\frac{1-\beta}{\beta}y_{_L}+R>0$.

\noindent\textbf{Bargaining in the DFM.}
\newline\noindent We present here the complete analysis of the version of the model with bargaining in the DFM. We consider agents bilaterally trading in this market according to a Nash bargaining procedure. It can be easily shown that in any meeting between two agents of the same type no trade will occur. Thus, we focus here on bilateral meetings of agents of different type. In this matches the portfolio of a low type agent is $(m,b,s)$ and that of a high type agents is $(\bar{m},\bar{b},\bar{s})$. Thus, agents must choose the amount of money, $d_m$, and securities, $d_s$, exchanged in order to \begin{eqnarray}\max_{d_m,d_s}& \left[V^{3L}(m+d_m,b,s-d_s)-V^{3L}(m,b,s)\right]^{\theta}\times\nonumber\\&
\times\left[V^{3H}(\bar{m}-d_m,\bar{b},\bar{s}+d_s)-V^{3H}(\bar{m},\bar{b},\bar{s})\right]^{1-\theta}\nonumber\\
\textrm{s.t.}&-m \leq d_m\leq \bar{m}\textrm{; }-\bar{s} \leq d_s\leq s\nonumber\end{eqnarray} In other words, agents maximize the value of trade. In principle, we formulate the problem in the most general way that allows either agent to sell or buy securities. However, in the equilibrium delivered by this version, in which returns on the risky asset are linear in the amount of the asset, H-types will always be the buyers and L-types the sellers. Now, using the linearity of the value function, we can rewrite the bargaining problem as \begin{eqnarray}\max_{d_m,d_s}& \left[\phi d_m-(\psi+y_{_L})d_s\right]^{\theta}\left[-\phi d_m + (\psi+y_{_H})d_s)\right]^{1-\theta}\label{linear bargaining}\\
\textrm{s.t.}&-m \leq d_m\leq \bar{m}\textrm{; }-\bar{s} \leq d_s\leq s\nonumber\end{eqnarray} Before proceeding with any further analysis, it is absolutely critical to realize that the problem we pose here is completely different from most other bargaining problems in the mainstream literature on money search, at least in those models derived from LW. The key aspect is the following. In those papers the welfare pie in the DFM is given by $u(q)-c(q)$, where $q$ is the consumption of the good in that market, and $u(q)$, $c(q)$ are the utility and cost derived from it. Whereas in our model the size of the pie is not pre-determined. In fact, it depends on the size of the portfolio that the parties carry with them. Therefore, the problem we propose here can only be solved by imposing constraints, otherwise the agents would optimally choose $d_m=d_s=+\infty$.\footnote{\,In all search models where consumption is traded in the DFM, its maximum level is bounded above by the condition $u'(q^*)=c'(q^*)$.} Thus, the optimal solution to our problem features one of the constraints above binding in equilibrium. We will proceed case by case and then summarize the results. \newline \newline \indent Suppose first that $d_m=\bar{m}$ and $d_s \leq s$. The problem has a unique interior solution determined by the FOC for $d_s$: \begin{eqnarray} \frac{\theta(\psi+y_{_L})}{\phi \bar{m}-(\psi+y_{_L})d_s}=\frac{(1-\theta)(\psi+y_{_H})}{-\phi \bar{m}+(\psi+y_{_H})d_s}, \nonumber \end{eqnarray} which yields \begin{eqnarray} d_s=\frac{\phi \bar{m}[\psi+(1-\theta)y_{_H}+\theta y_{_L}]}{(\psi+y_{_L})(\psi+y_{_H})}\equiv d_s^*. \nonumber \end{eqnarray} To be more precise, $d_s=\min\{s,d_s^*\}$. Now let us suppose that $d_s=s$ and $d_m\leq \bar{m}$. The first order condition for $d_m$ can be written as \begin{eqnarray} \theta [-\phi d_m+(\psi+y_{_H})s]=(1-\theta)[\phi d_m-(\psi+y_{_L})s], \nonumber \end{eqnarray} which implies that \begin{eqnarray} d_m=\frac{s[\psi+\theta y_{_H}+(1-\theta) y_{_L}]}{\phi}\equiv d_m^*. \nonumber \end{eqnarray} So, in fact, $d_m=\min\{\bar{m},d_m^*\}$. In more detail, since there can be equilibria where only one of the constraint binds or both bind, we have that if $\bar{m}\leq d_m^*$ then $d_m=\bar{m}$. In this case, $d_s=s$ only if $s \leq d_s^*$. In other words, both constraints would be binding $(d_m=\bar{m}, d_s=s)$ in an equilibrium where $\bar{m}\leq \frac{s}{\phi}[\psi+\theta y_{_H}+(1-\theta)y_{_L}]\equiv m_h$ and $\bar{m}\geq \frac{s}{\phi}\frac{(\psi+y_{_L})(\psi+y_{_H})}{[\psi+(1-\theta) y_{_H}+\theta y_{_L}]}\equiv m_l$. It is easy to check that \begin{eqnarray} \psi+\theta y_{_H}+(1-\theta)y_{_L}\geq \frac{(\psi+y_{_L})(\psi+y_{_H})}{[\psi+(1-\theta) y_{_H}+\theta y_{_L}]}, \nonumber \end{eqnarray} and this condition holds with strict inequality unless $y_{_H}=y_{_L}$, or $\theta =0$, or $\theta =1$; but none of these happens in our model. Therefore, for all $\bar{m} \in [m_l,m_h]$ agents exchange $d_m=\bar{m}$ and $d_s=s$. The cases where only one constraint binds are described by the pairs $(d_m=\bar{m}<m_l,d_s=d_s^*<s)$ and $(m_h=d_m=d_m^*<\bar{m},d_s=s)$. That is, if money holdings of the buyer are enough to purchase all $s$ from the seller, then $d^*_m$ is handed over in exchange for all $s$. Otherwise, the buyer will trade away all her money in exchange for $d^*_s$. All these possible cases are described by the solution below \begin{eqnarray} d_m=\min\left\{\bar{m},\frac{s[\psi+\theta y_{_H}+(1-\theta) y_{_L}]}{\phi}\right\},\nonumber\\\label{linear bargaining solution} \\d_s=\min\left\{s,\frac{\phi \bar{m}[\psi+(1-\theta)y_{_H}+\theta y_{_L}]}{(\psi+y_{_H})(\psi+y_{_L})}\right\}.\nonumber\end{eqnarray} Once we know how the equilibrium in the DFM looks like, we can solve the rest of the model proceeding as in the previous section. We use the solution to the DFM in the optimization of the first market, and then we find the optimal solution for $\hat{m}$ and $\hat{a}$. We first have to find $s$ so that \begin{eqnarray}
V^1(m,a)=\max_{s\in[0,a]}\left\{ \frac{1}{2}\left[ V^{2L}\left( m,b,s \right) +V^{2H}\left( m,b,s \right) \right] \right\}\label{V1 for s barg}.
\end{eqnarray} where now \begin{eqnarray}V^{2L}\left( m,b,s \right)\nonumber=\frac{\lambda}{2}V^{3L}\left( m+\int d_m(\tilde{m},s)dF(\tilde{m}),b,s-\int d_s(\tilde{m},s)dF(\tilde{m}) \right)&&\nonumber\\+\left(1-\frac{\lambda}{2}\right)V^{3L}\left( m,b,s \right),&&\label{V2L barg}\end{eqnarray} and \begin{eqnarray}V^{2H}\left( m,b,s \right)\nonumber=\frac{\lambda}{2}V^{3H}\left( m-\int d_m(m,\tilde{s})dG(\tilde{s}),b,s+\int d_s(m,\tilde{s})dG(\tilde{s}) \right)&&\nonumber\\+\left(1-\frac{\lambda}{2}\right)V^{3H}\left( m,b,s \right).&&\label{V2H barg}\end{eqnarray} In the previous value functions, $F(\tilde{m})$ and $G(\tilde{s})$ are the distribution of money and security holdings in the economy. Using the solution to the bargaining problem, (\ref{V1 for s barg}) can be written as \begin{eqnarray} &V^1(m,a)=\max_{s \in [0,a]} \left\{\Lambda +\phi m +(\psi +R)a+\left(\frac{1}{2}y_{_L}+\frac{1}{2}y_{_H}-R\right)s\right. \nonumber \\ &\left. + \frac{\lambda \phi}{4}\min\left\{\bar{m},\frac{[\psi+\theta y_{_H}+(1-\theta)y_{_L}]}{\phi}s\right\}- \frac{\lambda \phi}{4}\min\left\{m,\frac{[\psi+\theta y_{_H}+(1-\theta)y_{_L}]}{\phi}\bar{s}\right\}\right. \nonumber \\&\left. + \frac{\lambda (\psi+y_{_H})}{4}\min\left\{\bar{s},\frac{\phi[\psi+(1-\theta) y_{_H}+\theta y_{_L}]}{\phi}m\right\}- \frac{\lambda (\psi+y_{_L})}{4}\min\left\{s,\frac{\phi[\psi+(1-\theta) y_{_H}+\theta y_{_L}]}{\phi}\bar{m}\right\}\right\}. \nonumber \end{eqnarray} Remember that at this stage we have to choose $s$, so we are only interested in the 4th, 5th, and 8th terms. Therefore, focusing only on those terms and rearranging them a little bit we can write the relevant objective function as \begin{eqnarray} \max_{s \in [0,a]} \left\{\left(\frac{1}{2}y_{_L}+\frac{1}{2}y_{_H}-R\right)s + \frac{\lambda}{4}\left[[\psi+\theta y_{_H}+(1-\theta)y_{_L}]\min\left\{\frac{\phi \bar{m}}{\psi+\theta y_{_H}+(1-\theta)y_{_L}},s\right\}\right.\right. &&\nonumber \\\left.\left. - (\psi+y_{_L})\min\left\{\frac{\phi[\psi+(1-\theta) y_{_H}+\theta y_{_L}]}{(\psi+y_{_H})(\psi+y_{_L})}\bar{m},s\right\}\right]\right\}.&& \nonumber \end{eqnarray} or in short \begin{eqnarray} \max_{s \in [0,a]} \left\{\left(\frac{1}{2}y_{_L}+\frac{1}{2}y_{_H}-R\right)s + \frac{\lambda}{4}\left[\alpha_1^b\min\left\{\alpha_2^b,s\right\} - \alpha_3^b\min\left\{\alpha_4^b,s\right\}\right]\right\}, \nonumber \end{eqnarray} where the definitions of $\alpha_i^b, i=1,...,4,$ are obvious. It is easy to check that $\alpha_i^b \geq 0,\alpha_1^b,\alpha_2^b>0,$ and most importantly, $\alpha_1^b\geq \alpha_3^b,\alpha_4^b>\alpha_2^b$. Now, let us recall that we are focusing on the interesting case $(1/2)(y_{_H}+y_{_L})-R\leq 0$. The following can easily be verified by visual inspection after straightforward substitution. If we choose $s \leq \alpha_2^b$ the objective function above is strictly increasing in $s$ as long as $(\lambda \theta)/4(y_{_H}-y_{_L})>-[(1/2)(y_{_H}+y_{_L})-R]$. On the other hand, if our choice was $s \in [\alpha_2^b,\alpha_4^b]$ the objective would become $(\lambda /4)\phi \bar{m}+[(1/2)(y_{_H}+y_{_L})-R-(\lambda /4)(\psi+y_{_L})]s$, which is clearly decresing in $s$. Finally, if $s \geq \alpha_4^b$ the objective would just be decreasing in $s$ with all but the first term being constant. In a word, the objective function is decreasing for all $s$ beyond $\alpha_2^b$, which yields the following characterization of the solution \begin{eqnarray}
s^*=\left\{\begin{array} {l@{\quad}l} 0,\, \textit{if} \,\,\,\frac{1}{2}(y_{_H} + y_{_L})< R-\frac{\lambda \theta (y_{_H}-y_{_L})}{4}, \\
\in\left[0,\min\{\alpha^b_2,a\}\right], \textit{if} \,\,\,\frac{1}{2}(y_{_H} + y_{_L}) = R-\frac{\lambda \theta (y_{_H}-y_{_L})}{4},\\
\min\{\alpha^b_2,a\}, \textit{if} \,\,\, \frac{1}{2}(y_{_H} + y_{_L})> R-\frac{\lambda \theta (y_{_H}-y_{_L})}{4}.
\end{array}\right.\label{optimal s bargaining}
\end{eqnarray} where $\alpha^b_2=\frac{\phi \bar{m}}{\psi+\theta y_{_H}+(1-\theta)y_{_L}}$. As this solution describes, that the DFM actually opens, $s^*>0$, depends on whether the expected return on the risky investment plus the value generated by re-balancing portfolios with bargaining is high enough. If the expected return of securities is low, $\frac{1}{2}(y_{_H} + y_{_L})\leq R-\frac{\lambda \theta (y_{_H}-y_{_L})}{4}$, it is optimal to set $s^*=0$ and no trade happens in the DFM. All we have to do then is solve for $\hat{m}$ and $\hat{a}$. In this case people store the safe asset only if its price is constant and behaves according to $\psi=\beta (R+\hat{\psi})$, that is $\psi = \beta R/(1-\beta)$. Obviously, in this equilibria money is not essential and will only circulate as long as $\phi = \beta \hat{\phi}$. On the other hand, if the expected return on securities is good enough, $\frac{1}{2}(y_{_H} + y_{_L})\geq R-\frac{\lambda \theta (y_{_H}-y_{_L})}{4}$, then $s^*=\min \{a,\alpha^b_2\}$ and we would proceed to solve for $\hat{m}$ and $\hat{a}$ using this choice of $s$. Surprisingly enough, even though in this case there could be trade in the DFM, the only equilibrium that can be supported again features the fundamental value of the real asset, $\psi = \beta R/(1-\beta)$. \newline \newline \indent Let us consider first the case of a bad distribution of returns, i.e., $(1/2)(y_{_H}+y_{_L})\leq R-(\lambda \theta /4)(y_{_H}-y_{_L})$. As we said, $s^*=0$ and we have to choose $\hat{m}$ and $\hat{a}$ to solve \begin{eqnarray} &&V^{3j}(m,b,s)= \nonumber \\ &&\kappa(m,b,s)+\max_{\hat{m},\hat{a}} \left\{-\phi \hat{m}-\psi\hat{a}+\beta \left[\frac{1}{2}V^{2L}(\hat{m}+\mu M,\hat{a},0)+\frac{1}{2}V^{2H}(\hat{m}+\mu M,\hat{a},0)\right]\right\}, \nonumber \end{eqnarray} for $j=H,L$, where $\kappa(m,b,s)$ is just a group of terms depending on previous choices of money, real assets, and securities, but does not depend on the actual choice variable. Also, remember that in the next period an amount $\mu M$ of money is injected (subtracted) in (from) the economy. The details of the rest of the computation are available upon request. All that matters is that the optimal decision for next period real asset holdings, $\hat{a}$, is given by \begin{eqnarray}\hat{a}^*=\left\{\begin{array}{lr}0,&\textrm{if }\psi>\beta (\hat{\psi}+R), \\\in \mathbb{R}_+,&\textrm{if }\psi=\beta (\hat{\psi}+R), \end{array}\right.\label{optimal ahat bad appendix}\end{eqnarray} and the optimal next period money holdings, $\hat{m}$, will only be positive under certain parameterizations, when $\phi=\beta \hat{\phi}$. In brief, if the distribution of returns on securities is bad, people will optimally choose not to issue any risky assets. As a consequence, there will be no trade in the DFM and money will only circulate under the Friedman rule. The only feasible path for the asset price is then $\psi =\frac{\beta R}{1-\beta}$.\newline \indent Finally, let us consider the opposite case of a good enough distribution of returns, $(1/2)(y_{_H}+y_{_L})\geq R-(\lambda \theta /4)(y_{_H}-y_{_L})$. In this scenario, we know that the best decision on issuing risky assets is $s^*=\min \{a,\alpha_2^b\}$. Using this to solve for $\hat{m}$ and $\hat{a}$ we can write \begin{eqnarray} &&V^{3j}(m,b,s)= \kappa(m,b,s) \nonumber \\ &&+\max_{\hat{m},\hat{a}} \left\{-\phi \hat{m}-\psi\hat{a}+\beta \left[\frac{1}{2}V^{2L}(\hat{m}+\mu M,\hat{a}-\min\{\hat{a},\hat{\alpha}_2^b\},\min\{\hat{a},\hat{\alpha}_2^b\}) \right.\right.\nonumber \\ && \left.\left.+\frac{1}{2}V^{2H}(\hat{m}+\mu M,\hat{a}-\min\{\hat{a},\hat{\alpha}_2^b\},\min\{\hat{a},\hat{\alpha}_2^b\})\right]\right\}, \nonumber \end{eqnarray} for $j=H,L$, where $\hat{\alpha}_2^b=\frac{\hat{\phi}(\hat{\bar{m}}+\mu M)}{\hat{\psi}+\theta y_{_H}+(1-\theta)y_{_L}}$. If we are careful with the right expressions for $d_m(\hat{\bar{m}}+\mu M,\min \{a,\alpha_2^b\})$, $d_m(\hat{m}+\mu M,\hat{\bar{s}})$, $d_s(\hat{m}+\mu M,\hat{\bar{s}})$, and $d_s(\hat{\bar{m}}+\mu M,\min \{a,\alpha_2^b\})$, and after some algebra, we can arrive at a more useful objective function \begin{eqnarray} &&\max_{\hat{m},\hat{a}} \left\{-(\phi-\beta \hat{\phi}) \hat{m} -[\psi-\beta(\hat{\psi} +R)]\hat{a}\right. \nonumber \\ &&\left.+\left[\frac{1}{2}(y_{_L}+y_{_H})-R  + \frac{\lambda}{4}[\hat{\psi}+\theta y_{_H}+(1-\theta)y_{_L}-(\hat{\psi}+y_{_L})]\right]\min\left\{\hat{a},\frac{\hat{\phi}(\hat{\bar{m}}+\mu M)}{\hat{\psi}+\theta y_{_H}+(1-\theta)y_{_L}}\right\}\right.\nonumber \\ && \left. +\frac{\lambda \hat{\phi}}{4}\frac{[\hat{\psi}+(1-\theta) y_{_H}+\theta y_{_L}]}{\hat{\psi}+ y_{_L}}\min\left\{\hat{m},\frac{(\hat{\psi}+ y_{_H})(\hat{\psi}+ y_{_L})\hat{\bar{s}}}{[\hat{\psi}+(1-\theta) y_{_H}+\theta y_{_L}]\hat{\phi}}-\mu M\right\} \right. \nonumber \\ && \left. \frac{\lambda \hat{\phi}}{4}\min\left\{\hat{m},\frac{[\hat{\psi}+\theta y_{_H}+(1-\theta) y_{_L}]\hat{\bar{s}}}{\hat{\phi}}-\mu M\right\} \right\} \nonumber \end{eqnarray} \begin{eqnarray} \equiv \max_{\hat{m}}\{-c_1^m \hat{m}+c_2^m \min\{\hat{m},c_3^m\}-c_4^m \min\{\hat{m},c_5^m\}\} && \nonumber \\ + \max_{\hat{a}}\{-c_1^a â \hat{a}+c_2^a \min\{\hat{a},c_3^a\}\},&& \nonumber \end{eqnarray} where the definitions of $c_i^m$ and $c_k^a$, $i=1,...,5$, $k=1,2,3$, are obvious. The optimal solutions for $\hat{m}$ and $\hat{a}$ are fully described below \begin{eqnarray}\hat{m}^*=\left\{\begin{array}{l}0,\textrm{ if }c_2^m<c_1^m+c_4^m, \\ \in [0,c_3^m],\textrm{ if }c_2^m=c_1^m+c_4^m, \\c_3^m,\textrm{ if }c_2^m>c_1^m+c_4^m, \end{array}\right. &\textrm{and}&\hat{a}^*=\left\{\begin{array}{l}0,\textrm{ if }c_2^a<c_1^a, \\ c_3^a,\textrm{ if }c_2^a>c_1^a\\ \in [0,c_3^a],\textrm{ if }c_2^a=c_1^a,c_1^a\neq 0, \\\in [c_3^a,+\infty],\textrm{ if }c_1^a=0. \end{array}\right. \label{optimal mhat ahat appendix}\end{eqnarray} However, it is important to notice that with this optimal decisions market clearing in the DFM fails in the following sense. Within any given period the position of agents on money, $m$, and the real asset, $a$, are optimally chosen in the previous period, having taken into account the optimal choice for $s$ this period. Thus, within any given period we have $s=\min \left\{a,\alpha_2^b=\frac{\phi m}{\psi+\theta y_{_H}+(1-\theta) y_{_L}}\right\}$ and $m=c_3^m=\frac{(\psi+y_{_H})(\psi+y_{_L})s}{[\psi+(1-\theta) y_{_H}+\theta y_{_L}]\phi}$, which are clearly incompatible since we have shown that $\psi+\theta y_{_H}+(1-\theta) y_{_L}>\frac{(\psi+y_{_H})(\psi+y_{_L})}{[\psi+(1-\theta) y_{_H}+\theta y_{_L}]}$. Therefore, the only possible equilibrium has $c_1^a=0$, $\psi^*=\frac{\beta R}{1-\beta}$, $\hat{m}^*=c_3^m$, and \begin{eqnarray} \hat{a}^*=A=\frac{\hat{\phi}(\hat{\bar{m}}+\mu M)[\psi+(1-\theta) y_{_H}+\theta y_{_L}]}{(\hat{\psi}+y_{_H})(\hat{\psi}+y_{_L})}. \nonumber \end{eqnarray} Finally, we characterize the range of monetary policies consistent with equilibria under bargaining in the case of a good distribution of return. The optimal choice of money balances requires that \begin{eqnarray} \frac{\lambda \hat{\phi}}{4}\frac{[\hat{\psi}+(1-\theta)y_{_H}+\theta y_{_L}]}{(\hat{\psi}+y_{_L})} \geq \phi-\beta \hat{\phi}+\frac{\lambda \hat{\phi}}{4}. \nonumber\end{eqnarray} Now, if we use the fact that the only possible path for the price of the asset is $\psi=\beta R/(1-\beta)$ and rearrange the expression, we arrive at \begin{eqnarray}\mu \leq \beta -1 + \frac{\lambda}{4}\frac{(1-\theta)(y_{_H}-y_{_L})}{\frac{\beta R}{1-\beta}+y_{_L}}=\bar{\mu}_b,\label{upper bound mu}\end{eqnarray} where we have used the fact that in a steady state $\phi=\hat{\phi}(1+\mu)$. We conclude that there are many policies consistent with equilibrium, $\mu \in [\beta -1,\bar{\mu}_b]$. However, none of them affect asset prices, since $\psi$ in this framework always reflects the fundamental value of the asset.

\end{appendix}
\newpage
Figure 1: Securities Demand and Supply
\newpage
Figure 2: Optimal Choice of Securities
\newpage
Figure 3: Asset Price and Inflation

\end{document}